\def\p{\partial}
\newtheorem{theorem}{Theorem}
\newtheorem*{corollary}{Corollary}
\newcommand{\dbar}{\bar{\partial}}
\newcommand{\wt}{\widetilde}
\newcommand{\be}{\begin{equation}}
\newcommand{\ee}{\end{equation}}
\newcommand{\bea}{\begin{eqnarray}}
\newcommand{\eea}{\end{eqnarray}}
\newcommand{\beaa}{\begin{eqnarray*}}
\newcommand{\eeaa}{\end{eqnarray*}}
\newcommand{\nn}{\nonumber}
\renewcommand{\d}{\mathrm{d}}
\begin{document}
\title{SDYM equations on the self-dual background}
%%%%%%%%%%%%%%%%%%%%%%%%%%%%%%%%%%%%%%%%%%%%%%%%%%%%%%%%%%%%%%%%
\author{
L.V. Bogdanov\thanks
{L.D. Landau ITP RAS,
Moscow, Russia}}
%%%%%%%%%%%%%%%%%%%%%%%%%%%%%%%%%%%%%%%%%%%%%%%%%%%%%%%%%%%%%
%%%%%%%%%%%%%%%%%%%%%%%%%%%%%%%%%%%%%%%%%%%%%%%%%%%%%%%%%%%%%%%%
\date{}
\maketitle
%%%%%%%%%%%%%%%%%%%%%%%%%%%%%%%%%%%%%%%%%%%%%%%%%%%%%%%%%%%%%
\begin{abstract} 
We introduce the technique combining the features of integration schemes for SDYM equations and multidimensional dispersionless integrable equations to get SDYM equations on the conformally self-dual background. Generating differential form is defined, the dressing scheme is developed. Some special cases and reductions are considered.
\end{abstract}
%%%%%%%%%%%%%%%%%%%%%%%%%%%%%%%%%%%%%%%%%%%%%%%%%%%%%%%%%%%
%\subsection*{General heavenly equation}
\section{Introduction}
SDYM (ASDYM) equations
\bea
\mathbf{F}=\pm*\mathbf{F} 
\label{SDYM0}
\eea
represent SD (ASD) condition for the two-form
$\mathbf{F}=\d\mathbf{A}+\mathbf{A}\wedge\mathbf{A}$, where
the gauge field (potential) $\mathbf{A}$ is a one-form
taking its values in some Lie algebra.
The most well-known results concerning the integrability
of SDYM equations are formulated in four-dimensional 
Euclidean space or its compactification 
(sometimes complexification). However, 
it is a well-established fact in twistor theory 
that SDYM equations are integrable 
(in terms of twistor construction)
on general nontrivial geometrical
background defined by self-dual conformal structure 
on some 4-manifold, this structure itself 
being integrable \cite{Penrose}, \cite{Atiah}.
Different reductions of SDYM equations give rise to
background geometries which are themselves solutions of
(dispersionless) integrable systems, the current picture
of the field and many examples are provided in  
\cite{Calderbank}. Thus there is an approach 
to consider dispersionless integrable systems as 
integrable background geometries for some 
reductions of SDYM equations. 
We will take the opposite direction and, 
starting from dispersionless integrable hierarchies, 
will consider an extension leading in particular 
to SDYM equations on the self-dual backgroung. 
In the process of extension we will transfer all integrable 
structures of dispersionless integrable systems -- 
Lax pairs, the hierarchy, Lax-Sato equations, 
the dressing scheme \cite{BDM}, \cite{LVB09}
-- to the case of SDYM equations on the background.
Though some of these structures have their analogues in 
twistor approach, there are differences in technique and
setting of the problems, and comparison of these two approaches should be mutually enriching.
%%%%%%%%%%%%%%%%%%%%%%%%%%
\section{ASD conformal structures}
%%%%%%%%%%%%%%%%%%%%%%%%%%%%%%
Our starting point will be a recent result \cite{DFK2014}
\begin{theorem}[Dunajski, Ferapontov and Kruglikov (2014)]
\label{prop_ASD}
There exist local coordinates $(z, w, x, y)$ such 
that any ASD conformal structure
in signature $(2, 2)$  is locally represented by a metric 
\bea
\label{ASDmetric1}
{\textstyle\frac{1}{2}}g=dwdx-dzdy-F_y dw^2-(F_x-G_y)dwdz + G_xdz^2,
\eea
where the functions
$F,\;G:M^4\rightarrow \mathbb{R}$ satisfy a 
coupled system of third-order PDEs,
\bea
&&
\p_x(Q(F))+\p_y(Q(G))=0,
\nn\\
&&
(\p_w+F_y\p_x+G_y\p_y)Q(G)+(\p_z+F_x\p_x+G_x\p_y)Q(F)=0,
\label{sd_3rd}
\eea
where
\[
Q=\p_w\p_x-\p_z\p_y+F_y{\p_x}^2-G_x{\p_y}^2-
(F_x-G_y)\p_x\p_y.
\]
\end{theorem}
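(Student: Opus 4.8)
The plan is a twistorial normal-form argument in three moves --- produce a foliation by totally null self-dual surfaces, straighten it into the coordinate form \eqref{ASDmetric1}, and read off \eqref{sd_3rd} from the vanishing of the self-dual Weyl curvature. For the first move: by the neutral-signature analogue of the nonlinear graviton construction \cite{Penrose}, \cite{Atiah}, anti-self-duality of a conformal structure in signature $(2,2)$ is equivalent to the existence of a $3$-parameter family of $\alpha$-surfaces --- through each point, tangent to each totally null self-dual $2$-plane, there passes such a surface. In particular one can choose, near a point, a single $\alpha$-surface foliation $\mathcal F$, and more invariantly $TM^4$ carries a pencil $\mathcal D_\lambda$, $\lambda\in\mathbb{CP}^1$, of totally null self-dual $2$-plane fields, each Frobenius-integrable, presented by a Lax pair on $M^4\times\mathbb{CP}^1$: $L=L_0(\lambda)+\ell(\lambda)\p_\lambda$, $M=M_0(\lambda)+m(\lambda)\p_\lambda$ with $[L,M]\in\langle L,M\rangle$, the vector fields $L_0,M_0$ polynomial in $\lambda$ and the $\p_\lambda$-terms $\ell,m$ (quadratic in $\lambda$) carrying the self-dual part of the spin connection.

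For the second move, straighten $\mathcal F=\mathcal D_\infty$: pick coordinates $(z,w,x,y)$ in which its leaves are the surfaces $\{w=\mathrm{const},\ z=\mathrm{const}\}$, so that $\mathcal D_\infty=\langle\p_x,\p_y\rangle$ and, this being a totally null plane, $g$ has no $dx^2,\,dxdy,\,dy^2$ terms. A M\"obius change of $\lambda$ together with a rotation of the null frame then brings $L_0=\p_w+\lambda\p_y+(\cdots)$ and $M_0=\p_z+\lambda\p_x+(\cdots)$, the dots being $\langle\p_x,\p_y\rangle$-valued; since $\p_w$ and $\p_z$ now differ from tetrad vectors only by vectors in the null plane $\langle\p_x,\p_y\rangle$, the coefficients of $dwdy$ and $dzdx$ vanish identically. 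The residual freedom --- reparametrisation of $(x,y)$ along the leaves together with the conformal scale --- is spent so that the ``constant'' part of $\tfrac12 g$ is exactly $dwdx-dzdy$. Three coefficients survive, those of $dw^2$, $dwdz$, $dz^2$, and here is the crux: the integrability of the pencil (the structure equations of $\mathcal D_\lambda$ at $\lambda=0$ and $\lambda=\infty$) forces them to descend to two scalar potentials --- one sets $F_y=-\tfrac12(\text{coeff. of }dw^2)$ and $G_x=\tfrac12(\text{coeff. of }dz^2)$, and then the $dwdz$ coefficient automatically equals $-(F_x-G_y)$. This produces \eqref{ASDmetric1}, with dual null coframe $\theta^1=dw$, $\theta^2=dx-F_y\,dw-(F_x-G_y)\,dz$, $\theta^3=dz$, $\theta^4=dy-G_x\,dz$, null tetrad $e_1=\p_w+F_y\p_x$, $e_2=\p_x$, $e_3=\p_z+(F_x-G_y)\p_x+G_x\p_y$, $e_4=\p_y$, and Lax pair $L=\p_w+F_y\p_x+\lambda\p_y+\cdots$, $M=\p_z+(F_x-G_y+\lambda)\p_x+G_x\p_y+\cdots$.

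For the third move, impose anti-self-duality: set the self-dual Weyl spinor of \eqref{ASDmetric1} to zero, equivalently require the full Lax distribution to be Frobenius-integrable. Since the metric coefficients are first order in $F,G$, the self-dual spin connection is first order and its curvature third order --- which is why \eqref{sd_3rd} is of third order --- and the second-order operator that organises the computation is precisely $Q=\p_w\p_x-\p_z\p_y+F_y\p_x^2-G_x\p_y^2-(F_x-G_y)\p_x\p_y=e_1e_2-e_3e_4$, the composition of the first-order operators $e_i$, i.e. the conformally-weighted wave operator of the metric. Expanding the self-dual Weyl spinor in the tetrad (equivalently, expanding $[L,M]\in\langle L,M\rangle$ order by order in $\lambda$) and discarding the redundant components yields the two identities \eqref{sd_3rd}; conversely, \eqref{ASDmetric1} subject to \eqref{sd_3rd} is anti-self-dual, so the correspondence is an equivalence and the metrics \eqref{ASDmetric1} are therefore general.

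The first and last moves are standard twistor theory and a long but mechanical Newman--Penrose-type computation. The substance of the theorem is the middle move: showing that the coordinate, frame and conformal freedom is \emph{precisely} enough to reach \eqref{ASDmetric1}, and in particular that the three surviving metric coefficients genuinely descend to two potentials with $g_{wz}=-(F_x-G_y)$ a consequence of the integrability of the $\alpha$-surface foliation rather than an additional demand. Keeping the count of residual gauge straight there is where the argument really lives.
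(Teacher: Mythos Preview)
The paper does not prove this statement. Theorem~\ref{prop_ASD} is quoted verbatim from \cite{DFK2014} as the \emph{starting point} for the paper's own work; the only gloss the paper adds is the observation (also attributed to \cite{DFK2014}) that system~\eqref{sd_3rd} arises as $[X_1,X_2]=0$ from the dispersionless Lax pair~\eqref{new_lax}, with $f_1=-Q(G)$, $f_2=Q(F)$. So there is no ``paper's own proof'' to compare your proposal against.

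That said, your sketch is the right architecture for a proof of the Dunajski--Ferapontov--Kruglikov result: produce the $\alpha$-surface foliation from anti-self-duality, straighten it to reach a normal form with two potentials, and then read off the third-order system from $W_+=0$ (equivalently, from Frobenius integrability of the twistor distribution). Your null tetrad $e_1=\p_w+F_y\p_x$, $e_2=\p_x$, $e_3=\p_z+(F_x-G_y)\p_x+G_x\p_y$, $e_4=\p_y$ is a legitimate alternative to the paper's tetrad~\eqref{fields}, and your identification $Q=e_1e_2-e_3e_4$ checks out. The one place your outline is genuinely thin is exactly where you flag it yourself: the ``middle move''. You assert that the residual coordinate and conformal freedom is precisely enough to force the $dwdz$ coefficient to equal $-(F_x-G_y)$ once the $dw^2$ and $dz^2$ coefficients are written as $-F_y$ and $G_x$, but you do not actually carry out the gauge count or the integrability argument that makes this happen. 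If you were writing this up for real, that step would need to be done explicitly --- it is the content of the theorem, and the rest is indeed mechanical.
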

A conformal structure $[g]$  is called anti-self-dual
if the self-dual  part of the Weyl tensor of any $g\in[g]$ vanishes: $W_+=\frac12(W+*W)=0$.
Real case with the signature (2,2) or, generally, complex 
analytic case may be considered.  
The choice of ASD or SD case is just a convention 
depending on orientation, and we will
follow \cite{DFK2014} 
in taking ASD case 
(though we changed the order of 
variables and the sign of one variable for 
technical reasons).

A crucial observation made in \cite{DFK2014} is that
system (\ref{sd_3rd}) arises as $[X_1, X_2]=0$ from the dispersionless
Lax pair
\bea
X_1=\p_z-\lambda\p_x
+F_x\p_x+G_x\p_y+ f_1\p_\lambda,
\nn
\\
X_2=\p_w- \lambda\p_y + F_y\p_x+G_y\p_y+f_2\p_\lambda.
\label{new_lax}
\eea
Due to compatibility conditions, $f_1$ and $f_2$ can be 
expressed through $F$ and $G$,
\beaa
&&
f_1=-Q(G),\quad f_2=Q(F),
\\
&&
Q=\p_w\p_x-\p_z\p_y+F_y{\p_x}^2-G_x{\p_y}^2-(F_x-G_y)\p_x\p_y.
\eeaa

Theorem \ref{prop_ASD} defines an important relation
between ASD conformal structures and dispersionless
integrable systems. The Lax pair (\ref{new_lax}) 
corresponds to lowest order flows of generic 
four-dimensional dispersionless integrable hierarchy 
\cite{BDM},
\cite{LVB09}, 
we will briefly describe main structures of this 
hierarchy below.

On the other hand, reductions of the Lax pair (\ref{new_lax}) correspond to important geometrical systems. A reduction to divergence-free vector fields gives
the Dunajski system \cite{DP} describing null 
K\"ahler case,
and a further reduction to linearly degenerate case
($f_1,f_2=0$, no derivative over $\lambda$ in 
vector fields) 
leads to the famous Pleba\'nski second heavenly 
equation (Einstein ASD case).
%%%%%%%%%%%%%%%%%%%%%%%%%%%%%%%%%%%%%%%%%%%%%%%
\section{Extended Lax pair}
%%%%%%%%%%%%%%%%%%%%%%%%%%%%%%%%%%%%%%%%%%%%%%%
Let us consider an extension of Lax pair (\ref{new_lax})
to covariant derivatives,
\bea
&&
\nabla_{X_1}=\p_z-\lambda\p_x
+F_x\p_x+G_x\p_y+ f_1\p_\lambda + A_1,
\nn
\\
&&
\nabla_{X_2}=\p_w- \lambda\p_y + F_y\p_x+G_y\p_y+f_2\p_\lambda + A_2,
\label{extLax}
\eea 
where gauge field components $A_1$, $A_2$ do not 
depend on $\lambda$ and take their values in some (matrix)
Lie algebra.
Lax pairs of this structure 
(without derivative over $\lambda$) were already present 
in the seminal work of
Zakharov and Shabat \cite{ZS} (1979), where it was noticed
that the commutation relation splits into (scalar) vector
field part, which is the same as for unextended Lax pair,
and Lie algebraic part,
\beaa
&&
[\nabla_{X_1}, \nabla_{X_2}]=
[X_1, X_2]+ X_1 A_2 - X_2 A_1 +[A_1,A_2]=0 \Rightarrow
\\
&&
[X_1, X_2]=0, \quad \text{vector fields part}
\\
&&
X_1 A_2 - X_2 A_1 +[A_1,A_2]=0. \quad\text{matrix part}
\eeaa
From the first part we get system (\ref{sd_3rd}) describing
ASD conformal structure, and the
second part gives the system for $A_1$, $A_2$
\bea
&&
\partial_x A_2=\partial_y A_1,
\nn
\\
&&
(\p_z
+F_x\p_x+G_x\p_y)A_2 - 
(\p_w+ F_y\p_x+G_y\p_y)A_1 + [A_1,A_2]=0,
\label{SDYM}
\eea
which contains the coefficients of the metric
as some kind of `background'. 

%%%%%%%%%%%%%%%%%%%%%%%%%%%%%%%%%%%%%%%%%%%%%%%%%%
For trivial background $F=G=0$ with conformal structure
\bea
g=2(dwdx-dzdy)
\label{constantmetric}
\eea
the extended Lax pair takes the form
\beaa
%\label{new_lax}
\nabla_{X_1}=\p_z-\lambda\p_x
+ A_1,
\quad
\nabla_{X_2}=\p_w- \lambda\p_y + A_2,
\eeaa 
and the commutativity condition is
\bea
\partial_x A_2=\partial_y A_1,
\quad
\p_z A_2 - 
\p_w A_1 + [A_1,A_2]=0, 
\label{SDYM00}
\eea
representing a well known  form
of ASDYM equations (\ref{SDYM0}) for constant metric $g$
(\ref{constantmetric})
in a special gauge (where two components of the gauge field
are eliminated by the gauge transform).
%%%%%%%%%%%%%%%%%%%%%%%%%%%%%%%%%%%%%%%%%%%%%%%%%%

The following
statement demonstrates that general backgroud 
in equations (\ref{SDYM})
has a direct geometric sense.
\begin{theorem} 
\label{prop_ASDYM}
Equations (\ref{SDYM}) represent 
ASDYM equations (\ref{SDYM0})
for the background conformal structure (\ref{ASDmetric1}) (in a special gauge).
\end{theorem}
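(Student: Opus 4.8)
The plan is to show that the first-order system \eqref{SDYM} is equivalent, modulo a gauge choice, to the full ASDYM equations $\mathbf{F}=-{*}\mathbf{F}$ for a connection on the bundle over $M^4$ equipped with the metric \eqref{ASDmetric1}. The natural bridge between the Lax-pair formulation and the intrinsic curvature condition is the spinor/null-frame description of anti-self-duality: a two-form on a conformally ASD four-manifold is anti-self-dual precisely when its restriction to every $\alpha$-plane (self-dual null two-plane) vanishes. The vector fields $X_1,X_2$ appearing in \eqref{new_lax}, for each fixed value of the spectral parameter $\lambda$, span exactly such an $\alpha$-plane — this is the content of the crucial observation from \cite{DFK2014} that $[X_1,X_2]=0$ reproduces \eqref{sd_3rd}. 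So the strategy is: (i) verify that $\{X_1(\lambda),X_2(\lambda)\}$ is a null distribution that is self-dual for the conformal structure \eqref{ASDmetric1}; (ii) read off that the curvature $\mathbf{F}$ of the connection $\mathbf{A}$ (whose components along $X_1,X_2$ are $A_1,A_2$, and whose remaining components are gauged away) is anti-self-dual iff $\mathbf{F}(X_1,X_2)=0$ for all $\lambda$; (iii) observe that $\mathbf{F}(X_1(\lambda),X_2(\lambda)) = [\nabla_{X_1},\nabla_{X_2}]$ restricted to the purely matrix part, which by the Zakharov--Shabat splitting recalled just above the statement is exactly $X_1A_2-X_2A_1+[A_1,A_2]$, a polynomial in $\lambda$; (iv) expand in powers of $\lambda$ and check that the vanishing of all coefficients is equivalent to the two equations \eqref{SDYM}.

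Concretely, I would first write the coframe dual to $(\p_z,\p_w,\p_x,\p_y)$ and express the metric \eqref{ASDmetric1} in it, then compute $g(X_i,X_j)$ for the $\lambda$-dependent vector fields obtained from \eqref{new_lax} by dropping the $\p_\lambda$-terms (these terms are irrelevant here since $A_1,A_2$ and the bundle do not involve $\lambda$). The key algebraic fact to verify is that $g(X_1,X_1)=g(X_2,X_2)=g(X_1,X_2)=0$ identically in $\lambda$, so that $\operatorname{span}\{X_1,X_2\}$ is totally null; its self-duality (as opposed to anti-self-duality) is then fixed by the orientation convention adopted from \cite{DFK2014}. Since the family of these $\alpha$-planes as $\lambda$ ranges over $\mathbb{P}^1$ sweeps out all self-dual null two-planes at a generic point, a two-form annihilated by all of them lies in $\Lambda^2_-$. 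For the connection side, I would note that in the chosen gauge two of the four components of $\mathbf{A}$ are set to zero — precisely the components "transverse" to the $X_i$ in the relevant sense — so that $\mathbf{F}(X_1,X_2)$ captures the entire anti-self-dual block; expanding the Lax commutator, the $\lambda^2$ coefficient gives $\p_xA_2-\p_yA_1=0$ (up to sign) and the $\lambda^0$ coefficient gives the second equation in \eqref{SDYM}, while the $\lambda^1$ coefficient vanishes by virtue of the background equations \eqref{sd_3rd} already in force.

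The main obstacle, I expect, is bookkeeping around the gauge fixing: one must argue carefully that eliminating two components of $\mathbf{A}$ is always achievable (at least locally) and, more importantly, that doing so does not lose any of the ASDYM content — i.e. that the reduced system \eqref{SDYM} together with the residual gauge freedom faithfully encodes $\mathbf{F}=-{*}\mathbf{F}$. This is the analogue, on a curved background, of the familiar passage from the four-component Yang's-form/$K$-matrix description of ASDYM to the two-component Lax formulation in flat space, and the subtlety is that the "gauge directions" are now defined relative to the variable frame $X_1(\lambda),X_2(\lambda)$ rather than relative to fixed coordinate vector fields. A clean way to handle this is to work invariantly: pick the rank-two self-dual null distribution spanned by $X_1(0),X_2(0)$ and its complementary null distribution, decompose $\mathbf{A}$ accordingly, and show the two "bad" components can be removed by a gauge transformation generated by solving a pair of compatible ODEs along the distributions — compatibility being guaranteed by $[X_1,X_2]=0$. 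Once the gauge is pinned down, the remaining steps are the routine $\lambda$-expansion and comparison already sketched, and the identification of the $\lambda^1$-coefficient's vanishing with \eqref{sd_3rd} is a direct consequence of Theorem~\ref{prop_ASD}.
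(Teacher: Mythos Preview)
Your strategy is correct and essentially the same as the paper's, only phrased twistorially rather than in an explicit null tetrad. The paper introduces the frame $\mathbf{e}_{00'}=\partial_w+F_y\partial_x+G_y\partial_y$, $\mathbf{e}_{10'}=\partial_z+F_x\partial_x+G_x\partial_y$, $\mathbf{e}_{01'}=\partial_y$, $\mathbf{e}_{11'}=\partial_x$, so that $X_1=\mathbf{e}_{10'}-\lambda\mathbf{e}_{11'}$ and $X_2=\mathbf{e}_{00'}-\lambda\mathbf{e}_{01'}$; it then writes the ASDYM conditions as the three tetrad equations $F_{11'01'}=0$, $F_{00'10'}=0$, $F_{00'11'}=F_{10'01'}$ and computes each curvature component via $\mathbf{F}(\mathbf{u},\mathbf{v})=[\nabla_\mathbf{u},\nabla_\mathbf{v}]-\nabla_{[\mathbf{u},\mathbf{v}]}$. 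Your ``$\mathbf{F}(X_1(\lambda),X_2(\lambda))=0$ for all $\lambda$'' is the same three scalar conditions repackaged as a polynomial identity in $\lambda$.

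There is, however, a concrete slip in your $\lambda$-bookkeeping. Since $A_1,A_2$ are $\lambda$-independent and the $X_i$ are linear in $\lambda$, the matrix expression $X_1A_2-X_2A_1+[A_1,A_2]$ is \emph{linear}, not quadratic, in $\lambda$: the $\lambda^1$ coefficient is $-(\partial_xA_2-\partial_yA_1)$ and the $\lambda^0$ coefficient is the second line of \eqref{SDYM}. There is no $\lambda^2$ term and no coefficient that ``vanishes by virtue of the background equations''. The background equations enter only in the guise $[X_1,X_2]\in\mathrm{span}\{\partial_x,\partial_y\}$ (note: at fixed $\lambda$ on $M^4$ this commutator is \emph{not} zero, because the $\partial_\lambda$ pieces $f_1\partial_\lambda,f_2\partial_\lambda$ have been dropped; it equals $f_1\partial_y-f_2\partial_x$), and this is what makes the $\nabla_{[X_1,X_2]}$ correction vanish in the chosen gauge.

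On the gauge-fixing you flag as the main obstacle: the distribution along which you want to kill $\mathbf{A}$ is $\mathrm{span}\{\partial_x,\partial_y\}$ --- the $\alpha$-plane at $\lambda=\infty$ --- not $\mathrm{span}\{X_1(0),X_2(0)\}$. The paper does this in one line by observing that for the metric \eqref{ASDmetric1} one of the ASDYM equations already reads $F_{xy}=0$ (equivalently $F_{11'01'}=0$), which is exactly the integrability condition allowing one to gauge $\mathbf{A}(\partial_x)=\mathbf{A}(\partial_y)=0$ locally. Your ODE-compatibility argument is the same fact, but attached to the wrong pair of vector fields.
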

\begin{proof}
First we notice that for metric (\ref{ASDmetric1})
due to ASDYM (\ref{SDYM0}) 
equations we have
$
F_{34}=0,
$
where we have used the matrix inverse to metric $g$ defining
symmetric bivector
\bea
{\textstyle \frac{1}{2}}q
=\p_w\cdot\p_x-\p_z\cdot\p_y+ F_y\,\p_x^2+
(G_y-F_x)\,\p_x\cdot\p_y-G_x\,\p_y^2,
\label{bivector}
\eea
$\det{g}=\det q=1$ (for this metric $F^{12}=F_{34}$).
Then it is possible to choose a gauge such that
$
A_3=A_4=0,
$
and we have only two nontrivial gauge field components
$A_1$, $A_2$.

The next step is to introduce a tetrad of vector fields 
and dual tetrad
of one-forms for which the conformal 
structure (\ref{ASDmetric1}) takes especially 
simple form, see
\cite{DFK2014}.

In terms of tetrad of one-forms
\bea
%\label{one_forms}
\begin{array}{ll}
{\bf e}^{00'}=dw,& {\bf e}^{01'}=dy-G_y dw - G_x dz,
\\
{\bf e}^{10'}=dz, & {\bf e}^{11'}=dx-F_ydw-F_x dz,
\end{array}
\label{forms}
\eea
conformal structure (\ref{ASDmetric1})
is represented as \cite{DFK}
\beaa
%\label{metric_tetrad}
g=2({\bf e}^{00'} {\bf e}^{11'}-{\bf e}^{01'} {\bf e}^{10'}),
\eeaa
where, following \cite{DFK2014}, we use spinor notations 
for 
indices.

The dual tetrad of vector fields is
\bea
\begin{array}{ll}
{\bf e}_{00'}=\p_w+F_y\p_x+G_y\p_y, 
&
{\bf e}_{01'}=\p_y
\\
{\bf e}_{10'}=\p_z+F_x\p_x+G_x\p_y, 
&
{\bf e}_{11'}=\p_x,
\end{array}
\label{fields}
\eea
and symmetric bivector (\ref{bivector}) reads 
\beaa
q=2({\bf e}_{00'} {\bf e}_{11'}-
{\bf e}_{01'} {\bf e}_{10'}).
\eeaa
ASDYM equations for this tetrad take the form
\beaa
F_{11'\;01'}=0, \quad
F_{00'\;10'}=0,\quad F_{00'\;11'}=F_{10'\;01'},
\eeaa
where, to calculate gauge field strength
$\mathbf{F}$ components in the tetrad basis,
we use a standard formula
\beaa
\mathbf{F}(\mathbf{u},\mathbf{v})=\nabla_\mathbf{u}\nabla_\mathbf{v}
-\nabla_\mathbf{v}\nabla_\mathbf{u}-
\nabla_{[\mathbf{u},\mathbf{v}]}
\eeaa
valid for arbitrary vector fields $\mathbf{u}$, $\mathbf{v}$.
Taking into account the structure of tetrade and the fact
that for our gauge $A_3=A_4=0$,  for the curvature components we get
\beaa
&&
F_{11'\;01'}=0,
\\
&&
F_{00'\;10'}=
(\p_w+F_y\p_x+G_y\p_y)A_1-(\p_z+F_x\p_x+G_x\p_y)A_2
-[A_1,A_2],
\\
&&
F_{00'\;11'}=-\p_x A_2,\quad
F_{10'\;01'}=-\p_y A_1
\eeaa
Thus ASDYM equations read
\beaa
&&
(\p_w+F_y\p_x+G_y\p_y)A_1-(\p_z+F_x\p_x+G_x\p_y)A_2
-[A_1,A_2]=0,
\\
&&
\p_x A_2=\p_y A_1,
\eeaa
that coincides with the Lie algebraic 
part of commutativity
condition for extended Lax pair (\ref{SDYM}).
\end{proof}
%%%%%%%%%%%%%%%%%%%%%%%%%%%%%%%%%%%%%%%%%%%%%%%%%%%%%%%%%%
Theorem \ref{prop_ASD}, Theorem \ref{prop_ASDYM} imply the
following statement:
%%%%%%%%%%%%%%%%%%%%%%%%%%%%%%%%%%%%%%%%%%%%%%%%%%%%%%%%
\begin{corollary}
Commutation relations $[\nabla_{X_1}, \nabla_{X_2}]=0$
for the Lax pair
\beaa
&&
\nabla_{X_1}=\p_z-\lambda\p_x
+F_x\p_x+G_x\p_y+ f_1\p_\lambda + A_1,
\nn
\\
&&
\nabla_{X_2}=\p_w- \lambda\p_y + F_y\p_x+G_y\p_y+f_2\p_\lambda + A_2,
%\label{extLax}
\eeaa 
represent a general local form (up to coordinate and
gauge transformations) of ASDYM equations (\ref{SDYM0}) for
ASD conformal structure (real case, signature (2,2)).
\end{corollary}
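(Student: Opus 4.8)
\medskip
\noindent\textbf{Proof strategy.}
The plan is to read the corollary as the exact superposition of Theorem~\ref{prop_ASD} and Theorem~\ref{prop_ASDYM}, glued by the Zakharov--Shabat splitting of the commutator recalled in Section~3. First I would start from an arbitrary ASDYM field $\mathbf{F}=-*\mathbf{F}$ living on an arbitrary ASD conformal structure $[g]$ in signature $(2,2)$, and invoke Theorem~\ref{prop_ASD} to choose local coordinates $(z,w,x,y)$ in which a representative of $[g]$ is the metric (\ref{ASDmetric1}) with $F,G$ solving (\ref{sd_3rd}); equivalently, in these coordinates the dispersionless Lax pair (\ref{new_lax}) is commuting, $[X_1,X_2]=0$, with $f_1=-Q(G)$ and $f_2=Q(F)$.

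Next I would fix the gauge exactly as in the proof of Theorem~\ref{prop_ASDYM}: the ASD condition for (\ref{ASDmetric1}) forces the curvature component $F_{34}=0$ (this is the step $F^{12}=F_{34}=0$ there), and since $\p_x,\p_y$ span an integrable distribution with flat induced connection, a local gauge transformation brings the potential to the form $A_3=A_4=0$, leaving only the two components $A_1,A_2$; Theorem~\ref{prop_ASDYM} then identifies the ASDYM equations in this gauge with the matrix system (\ref{SDYM}). The final step is to package the two surviving systems together: by the Zakharov--Shabat computation the single relation $[\nabla_{X_1},\nabla_{X_2}]=0$ for the extended Lax pair splits into its scalar vector-field part, which is (\ref{sd_3rd}), and its Lie-algebra part $X_1A_2-X_2A_1+[A_1,A_2]=0$, which is (\ref{SDYM}); so the coordinates and gauge constructed above turn the given ASDYM solution into a solution of $[\nabla_{X_1},\nabla_{X_2}]=0$. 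For the converse direction I would run the chain backwards: any local solution of $[\nabla_{X_1},\nabla_{X_2}]=0$ yields, through its vector-field part, an ASD conformal structure of the form (\ref{ASDmetric1}) (Theorem~\ref{prop_ASD}) and, through its matrix part, an ASDYM field on that background in the gauge $A_3=A_4=0$ (Theorem~\ref{prop_ASDYM}). The two directions together give the ``general local form, up to coordinate and gauge transformations'' claimed.

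I expect the one genuinely nontrivial point to be the gauge reduction $A_3=A_4=0$: one has to be certain that the curvature really vanishes on the $\langle\p_x,\p_y\rangle$-planes as a consequence of ASDYM for the curved metric (\ref{ASDmetric1}), not merely for the flat one, and that the leafwise trivialization of this flat connection can be chosen to depend smoothly on the transverse variables $(z,w)$, so that the resulting gauge is genuinely local on $M^4$ rather than only on each $(x,y)$-plane. Everything else is bookkeeping --- matching the residual coordinate freedom left by Theorem~\ref{prop_ASD} and the residual gauge freedom against the qualifier ``up to coordinate and gauge transformations'', and checking that no ASDYM component is lost in passing to the tetrad (\ref{forms})--(\ref{fields}), which is already carried out inside the proof of Theorem~\ref{prop_ASDYM}.
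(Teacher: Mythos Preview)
Your proposal is correct and follows exactly the paper's own route: the paper presents the corollary as an immediate consequence of Theorem~\ref{prop_ASD} and Theorem~\ref{prop_ASDYM}, glued together by the Zakharov--Shabat splitting of $[\nabla_{X_1},\nabla_{X_2}]$ into its vector-field and Lie-algebraic parts, precisely as you outline. Your write-up is in fact more detailed than the paper's (which gives no separate proof beyond a one-sentence explanation), and your flagging of the smoothness of the gauge reduction $A_3=A_4=0$ in the transverse variables is a point the paper leaves implicit.
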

%%%%%%%%%%%%%%%%%%%%%%%%%%%%%%%%%%%%%%%%%%%%%%%%%%%%%%%%
Vector fields part of the commutation relation
defines equations (\ref{sd_3rd})
for the coefficients of the metric (\ref{ASDmetric1})
representing ASD conformal structure, and Lie-algebraic
part of commutation relations (\ref{SDYM}) gives ASDYM equations for this structure (in a special gauge).

The gauge freedom can be recovered by arbitrary gauge 
transformation, and it is possible to write
the Lax pair in gauge invariant form,
\bea
&&
\nabla_{X_1}=\p_z+
F_x\p_x+G_x\p_y + A_1-\lambda(\p_x+B_1)+ f_1\p_\lambda,
\nn\\
&&
\nabla_{X_2}=\p_w + F_y\p_x+G_y\p_y + A_2-\lambda (\p_y+B_2)+f_2\p_\lambda
\label{gaugeLax}
\eea
Vector field part of commutation relation remains 
the same and gives equations (\ref{sd_3rd}) for
conformal ASD structure.
Lie-algebraic part of commutation relations gives
the equations
\bea
&&
\partial_x B_2 - \partial_y B_1 + [B_1,B_2]=0,
\nn
\\
&&
\partial_x A_2 - (\p_w + F_y\p_x+G_y\p_y)B_1 
- [B_1,A_2]
\nn
\\
&&
\qquad=\partial_y A_1-(\p_z+ F_x\p_x+G_x\p_y)B_2
-[B_2,A_1],
\nn
\\
&&
(\p_z
+F_x\p_x+G_x\p_y)A_2 - 
(\p_w+ F_y\p_x+G_y\p_y)A_1 
\nn
\\
&&
\qquad
+ [A_1,A_2]
+ Q(F) B_1 +
Q(G) B_2
=0,
\label{gaugeASDYM}
\eea
representing gauge-invariant form of ASDYM equations (\ref{SDYM0})
for ASD conformal structure
(\ref{ASDmetric1}). Here matrices $A_1$, $A_2$, $B_1$,
$B_2$ are defined as components of the gauge field in
tetrad basis (\ref{forms}), (\ref{fields}) and can be
easily expressed through coordinate components.
Lax pair (\ref{gaugeLax}) may be considered as direct
covariant extension of the vector fields pair
(\ref{new_lax}) by changing partial derivatives
$\p_z$, $\p_w$, $\p_x$, $\p_y$ to covariant derivatives
$\nabla_z$, $\nabla_w$, $\nabla_x$, $\nabla_y$.
%%%%%%%%%%%%%%%%%%%%%%%%%%%%%%%%%%%%%%%%%%%%%%%%%%%%%%
\section{Matrix dressing on the geometric background}
The dressing scheme for ASDYM equations based
on matrix Riemann-Hilbert (RH) problem 
(see, e.g., \cite{AT93})
can be extended to the case of nontrivial background.
Generally, we may consider matrix RH problem
\bea
\Phi_{+}=\Phi_{-}R(\psi_1,\psi_2,\psi_3),
\label{RH1}
\eea
defined on some oriented curve $\gamma$ in the complex plane,
or matrix $\dbar$ problem
\bea
\dbar\Phi=\Phi R(\psi_1,\psi_2,\psi_3),
\label{dbar1}
\eea
defined in some region $G$, where $\psi_i(\lambda,\mathbf{t})$ are arbitrary wave functions of 
dispersionless Lax pair
 \beaa
%\label{new_lax}
{X_1}\psi_i=(\p_z-\lambda\p_x
+F_x\p_x+G_x\p_y+ f_1\p_\lambda)\psi_i=0,
\\
{X_2}\psi_i=(\p_w- \lambda\p_y + F_y\p_x+G_y\p_y+f_2\p_\lambda)\psi_i=0,
\eeaa 
defined on $\gamma$ or in $G$. 
%%%%%%%%%%%%%%%%%%%%%%%%%%%%%%%%%%%%%%%%%%%%%%%%%%
Let us suggest the existence of solution $\Phi$ of RH (or $\dbar$)
problem having no zeroes and normalized by 1 at infinity,
$
\Phi_\infty=1+\sum_{n=1}^\infty \Phi_n(\mathbf{t})\lambda^{-n}.
$
Then ${X_1}\Phi$, ${X_2}\Phi$ satisfy
the same problem ($[X_1,R]=[X_2,R]=0$), 
and the functions
$
({X_1}\Phi) \Phi^{-1},
$
$
({X_2}\Phi) \Phi^{-1}
$
are holomorphic in the complex plane.
Considering the
behaviour at infinity, we get
\beaa
({X_1}\Phi) \Phi^{-1}=-\partial_x\Phi_1(\mathbf{t}),
\\
({X_2}\Phi) \Phi^{-1}=-\partial_y\Phi_1(\mathbf{t}),
\eeaa
thus $\Phi$ is a solution for the extended Lax pair (\ref{extLax})
with the gauge field
\beaa
A_1=\partial_x\Phi_1(\mathbf{t}),\;
A_2=\partial_y\Phi_1(\mathbf{t}),
\eeaa
which satisfies equations (\ref{SDYM}).

Dropping the normalization condition at infinity and 
demanding only regularity,
we will get solution for gauge-invariant extended
Lax pair (\ref{gaugeLax}) and equations
(\ref{gaugeASDYM}).

For constant metric $g$ (\ref{constantmetric})
corresponding to trivial vector
fields we have wave functions
$\psi_1=\lambda$, $\psi_2=x+\lambda z$,
$\psi_3=y+\lambda w$,
and RH problem (\ref{RH1}) reduces 
to standard Riemann-Hilbert problem
for ASDYM equations (\ref{SDYM00}) (see, e.g., \cite{AT93}).
%%%%%%%%%%%%%%%%%%%%%%%%%%%%%%%%%%%%%%%%%%%%%%%%%%

Important classes of reductions of equations (\ref{sd_3rd})
are connected with
the existence of wave functions 
for dispersionless Lax pair (\ref{new_lax})
with special analytic properties in $\lambda$, e.g., 
{\em polynomial wave functions}  $\psi=P^n(\lambda)$,
coefficients of the polynomial depends on times.
A class of special ASDYM solutions for these
background
geometries is defined by the problems
\beaa
\Phi_{+}=\Phi_{-}R(P^n)
\quad\text{or}\quad
\dbar\Phi=\Phi R(P^n).
\eeaa

Another important reduction is
{\em linearly-degenerate case}, for which 
there is no $\p_\lambda$ in dispersionless Lax pair and $\lambda$
is one of the wave functions, 
\beaa
\Phi_{+}=\Phi_{-}R(\lambda,\psi_1,\psi_2)
\quad\text{or}\quad
\dbar\Phi=\Phi R(\lambda,\bar\lambda,\psi_1,\psi_2).
\eeaa
In this case ASDYM Lax pair admits rational (in $\lambda$)
solutions with simple stationary poles (correspond
to $\delta$-functions in the $\dbar$ problem), which can
be calculated explicitly.
%%%%%%%%%%%%%%%%%%%%%%%%%%%%%%%%%%%%%%%%%%%%%%%%%%
%%%%%%%%%%%%%%%%%%%%%%%%%%%%%%%%%%%%%%%%%%%%%%%%%%
\section{From the dressing scheme to the hierarchy}
To introduce the hierarchy connected with extended
Lax pair (\ref{extLax}) 
and equations (\ref{SDYM}), we will start from
extended dressing scheme and obtain generating relations for the hierarchy and Lax-Sato equations. 
It is also possible 
to consider these generating relations independently 
in terms of formal series, and the dressing scheme as
a tool to construct solutions.

%{Integration. The dressing scheme. Vector fields}
%1. \textbf{Dressing for vector fields.}
First we briefly outline the dressing scheme for 
multidimensional dispersionless hierarchy connected with
the Lax pair (\ref{new_lax}) \cite{BDM}, \cite{LVB09}. 
We consider nonlinear vector
Riemann-Hilbert problem on the unit circle,
\bea
&&
\Psi^0_\text{in}=F_0(\Psi^0_\text{out},\Psi^1_\text{out},\Psi^2_\text{out}),
\nn\\
&&
\Psi^1_\text{in}=F_1(\Psi^0_\text{out},\Psi^1_\text{out},\Psi^2_\text{out}),
\nn\\
&&
\Psi^2_\text{in}=F_2(\Psi^0_\text{out},\Psi^1_\text{out},\Psi^2_\text{out}),
\label{RHvector}
\eea
outside the unit circle the solutions are analytic and
given by expansions of the form
\beaa
&&
\Psi^0_\text{out}=\lambda+\sum_{n=1}^\infty \Psi^0_n(\mathbf{t}^1,\mathbf{t}^2)\lambda^{-n},
%\label{form0}
\\&&
\Psi^1_\text{out}=\sum_{n=0}^\infty t^1_n (\Psi^0)^{n}+
\sum_{n=1}^\infty \Psi^1_n(\mathbf{t}^1,\mathbf{t}^2)
\lambda^{-n}
%\label{form1}
\\&&
\Psi^2_\text{out}=\sum_{n=0}^\infty t^2_n (\Psi^0)^{n}+
\sum_{n=1}^\infty \Psi^2_n(\mathbf{t}^1,\mathbf{t}^2)
\lambda^{-n},
%\label{form2}
\eeaa 
inside the unit circle the functions are analytic.
%%%%%%%%%%%%%%%%%%%%%%%%%%%%%%%%%%%%%%%%%%%%%%%%%%%%
Solutions to RH problem (\ref{RHvector})
$\Psi^0$, $\Psi^1$, $\Psi^2$ will give  wave fuctions for the
hierarchy of commuting vector fields,
defined through coefficients of expansion of these functions, $\mathbf{t}^1=(t^1_1,t^1_2,\dots)$,
$\mathbf{t}^2=(t^2_1,t^2_2,\dots)$ are two infinite sets of independent variables of the hierarchy.
%%%%%%%%%%%%%%%%%%%%%%%%%%%%%%%%%%%%%%%%%%%%%%%%%%%
%\frametitle{Integration. The dressing scheme. Matrix part}
%2. \textbf{Matrix dressing on the background.} 
To obtain a gauge field extension of the hierarchy,
we introduce also a matrix
Riemann-Hilbert problem
\beaa
\Phi_\text{in}=\Phi_\text{out}R(\Psi^0_\text{out},\Psi^1_\text{out},\Psi^2_\text{out}),
\eeaa
$\Phi$ is normalized by 1 at infinity and analytic inside
and outside
the unit circle,
\beaa
\Phi_\text{out}=1+\sum_{n=1}^\infty \Phi_n(\mathbf{t}^1,\mathbf{t}^2)\lambda^{-n}
\eeaa
Expansions of $\Psi$, $\Phi$ give coefficients for
extended Lax pair, $\Phi$ is a wave function. A general
wave function is given by the expression
$\Phi F(\Psi^0,\Psi^1,\Psi^2)$, $F$ is an arbitrary 
complex-analytic matrix function.
%%%%%%%%%%%%%%%%%%%%%%%%%%%%%%%%%%%%%%%%%%%%%%%%%%%%%%%

%\frametitle{From the dressing scheme to the hierarchy}
The vector fields part of the dressing scheme implies
analyticity in the complex plane of the form (no discontinuity on the unit circle)
\beaa
\omega=\left|\frac{D(\Psi^0,\Psi^1,\Psi^2)}{D(\lambda,x_1,x_2)}\right|^{-1}
\d\Psi^0\wedge\d\Psi^1\wedge\d\Psi^2,
\eeaa
where $x_1=t^1_0$, $x_2=t^2_0$ are 
lowest times of the hierarchy, and from matrix
Riemann problem we get analyticity of the
matrix-valued form
\beaa
\Omega=\omega\wedge\d\Phi\cdot \Phi^{-1}.
\eeaa

%%%%%%%%%%%%%%%%%%%%%%%%%%%%%%%%%%%%%%%%%%%%%%%%%%%

Analyticity of these forms imply the relations
\beaa
(\omega_\text{out})_-=\left(
\left|\frac{D(\Psi^0_\text{out},\Psi^1_\text{out},\Psi^2_\text{out})}{D(\lambda,x_1,x_2)}\right|^{-1}
\d\Psi^0_\text{out}\wedge\d\Psi^1_\text{out}\wedge\d
\Psi^2_\text{out}
\right)_-=0,
\eeaa
\beaa
(\Omega_\text{out})_{-}=(\omega_\text{out}
\wedge\d\Phi_\text{out}\cdot \Phi_\text{out}^{-1})_-=0
\eeaa
for the series $\Psi^0_\text{out}$, $\Psi^1_\text{out}$,
$\Psi^2_\text{out}$, $\Phi_\text{out}$.
These relations are generating relations for the hierarchy
in terms of formal series, they are equivalent to the complete set of Lax-Sato equations
of the hierarchy. Though we used the dressing scheme to
introduce these relations, they may be considered independently.

First relation gives Lax-Sato equations for the hierarchy
of commuting polynomial in $\lambda$ vector fields (here we drop subscript `out' for the series):
\bea
&&
\partial^k_n {\Psi}=\sum_{i=0}^2\left(
\left(\frac{D(\Psi^0,\Psi^1,\Psi^2)}{D(\lambda,x_1,x_2)}\right)^{-1}_{ik} 
(\Psi^0)^n\right)_+
{\partial_i}{\Psi},
\label{genSato}
\eea
where $\quad 1\leqslant n < \infty$, $k=1,2$,
$\p_0=\p_\lambda$, $\p_1=\p_{x_1}$, $\p_2=\p_{x_2}$,
$\Psi=(\Psi^0,\Psi^1,\Psi^2)$.

%%%%%%%%%%%%%%%%%%%%%%%%%%%%%%%%%%%%%%%%%%%%%%%%%%

The second generating relation gives Lax-Sato
equations for $\Phi$ on the vector field background
in terms of extended polynomial vector fields,
\beaa
&&
\partial^k_n {\Psi}=V^k_n(\lambda)\Psi,
\\
&&
\partial^k_n {\Phi}=\left(V^k_n(\lambda)-
((V^k_n(\lambda)\Phi)\cdot\Phi^{-1})_+\right)\Phi,
\eeaa
where vector fields $V^k_n(\lambda)$ are defined 
by formula (\ref{genSato}).
First flows give exactly extended Lax pair
for ASDYM equations on ASD background (\ref{extLax}),
if we identify $z=t^1_1$, $w=t^2_1$, $x=x_1$, $y=x_2$.
%%%%%%%%%%%%%%%%%%%%%%%%%%%%%%%%%%%%%%%%%%%%%%%%%%%%%%%
\subsection*{Discussion, open questions}
Though ASD conformal structure in
Pleba\'nski-Robinson form (\ref{ASDmetric1}),
(\ref{bivector}) 
can be considered in general complex case, it is not convenient to use it
to obtain a real slice with Euclidean signature.
Motivated by the K\"ahler case,
we suggest to consider conformal structure
defined by the symmetric bivector
\bea
{\textstyle \frac{1}{2}}q=a\,\p_w\cdot\p_{\wt w}
+ b\,\p_z\cdot\p_{\wt w}+
c\,\p_w\cdot\p_{\wt z} + 
d\,\p_z\cdot\p_{\wt z} ,
\label{bivector1}
\eea
and corresponding extended Lax pair
\bea
&&
\nabla_{X_1}=\p_{\wt z}-\lambda(a\p_{ w}+b\p_{ z}) + 
(\lambda^2 f_1 + \lambda g_1)\p_{\lambda}
+A_1 -\lambda B_1,
\nn\\
&&
\nabla_{X_2}=\p_{\wt w} + \lambda(c\p_{ w}+d\p_{ z}) +
(\lambda^2 f_2 + \lambda g_2)\p_{\lambda} + A_2 + 
\lambda B_2, 
\label{KLax}
\eea
Vector fields part of commutation relations gives
seven equations for eight functions because of conformal
freedom. To fix representative of conformal structure and
close the system of equations,
it is convenient to use the condition
$
\det
\begin{pmatrix}
a&b
\\
c&d
\end{pmatrix}
=1,
$
in this case three independent coefficients of bivector
(\ref{bivector1}) satisfy three second-order equations,
and the conformal structure depends 
on 6 arbitrary functions of three variables.  
Another choice to fix conformal freedom is to put
$g_1=g_2=0$, 
in the scalar-flat K\"ahler case 
vector fields Lax pair
of this type was considered in \cite{Park}, \cite{Takasaki}.

The 
conjecture is that conformal structure (\ref{bivector1})
with the coefficients satisfying vector fields part of
commutation relations for the Lax pair (\ref{KLax})
gives a general local form of complex ASD conformal 
structure, and
Lie algebraic part of commutation relations gives
ASDYM equations on this background. 
The first part of the conjecture is connected 
with general recent results of the work \cite{CaldKrug}.
ASD conformal structure 
of the form (\ref{bivector1})
can be useful for 
reduction to Hermitian 
case and real case with Euclidean signature.

Another interesting question concerns covariant
extension of dispersionless integrable hierarchies 
in lower as well as in higher dimensions and the geometric
meaning of arising systems. An important
(2+1)-dimensional example is provided by the 
Manakov-Santini system. Let us consider extended Lax pair
\beaa
&&
\nabla_{X_1}=
\partial_y-(\lambda-v_{x})\partial_x + u_{x}\partial_\lambda + A,
\nn\\
&&
\nabla_{X_2}=\partial_t-(\lambda^2-v_{x}\lambda + u -v_{y})\partial_x
+(u_{x}\lambda+u_{y})\partial_\lambda +\lambda A + B,
%\label{MSLax}
\eeaa
where $A$, $B$ are gauge field components. Vector field
part of commutation relations gives the Manakov-Santini
system \cite{MS06}, \cite{MS07}
\bea
u_{xt} &=& u_{yy}+(uu_x)_x+v_xu_{xy}-u_{xx}v_y,
\nn\\
v_{xt} &=& v_{yy}+uv_{xx}+v_xv_{xy}-v_{xx}v_y,
\label{MSeq}
\eea
describing general Einstein-Weyl geometry \cite{DFK2014},
and matrix part of compatibility conditions read
\beaa
&&
A_y - B_x=0,
\\
&&
(\p_y+v_x\p_x)B-(\p_t+(v_y-u)\p_x) A + u_x A +[A,B]=0
\eeaa
For the potential $\Phi$, $A=\Phi_t$, $B=\Phi_y$
we  have
\bea
\Phi_{tx}-\Phi_{yy} - [\Phi_x,\Phi_y] - \p_x(u\Phi_x) + v_y \Phi_{xx} - v_x \Phi_{xy}=0,
\label{mono}
\eea
where $u$, $v$ satisfy Manakov-Santini system describing
Einstein-Weyl geometry.

The natural conjecture is that system (\ref{mono}) represents
a general local form of monopole equations on Einstein-Weyl
background (up to coordinate transformations and a gauge).
We are planning to consider extended Manakov-Santini hierarchy in more detail  in the nearest future.
%%%%%%%%%%%%%%%%%%%%%%%%%%%%%%%%%%%%%%%%%%%%%%%%%%
%%%%%%%%%%%%%%%%%%%%%%%%%%%%%%%%%%%%%%%%%%%%%%%%%%%%%%%
%%%%%%%%%%%%%%%%%%%%%%%%%%%%%%%%%%%%%%%%%%%%%%%%%%%%%%%%%
%%%%%%%%%%%%%%%%%%%%%%%%%%%%%%%%%%%%%%%%%%%%%%%%%%%%%%%%%%%%%%%%%%%%%%%%%%%%%%%%
%%%%%%%%%%%%%%%%%%%%%%%%%%%%%%%%%%%%%%%%%%%%%%%%%%%%%%%%%%%%%%%%%%%%%%%%%%%%%%%%%
\subsection*{Acknowledgments}
The main results of this article were first presented
at LMS -- EPSRC Durham Symposium
``Geometric and Algebraic Aspects of Integrability", 
and the author is grateful to the organisers
for kind hospitality.
The author also appreciates useful discussions with
E.V. Ferapontov, Macej Dunajski,
Boris Kruglikov and David Calderbank. This work 
was supported in part
by the President of Russia grant 9697.2016.2 (Scientific Schools).
 %%%%%%%%%%%%%%%%%%%%%%%%%%%%%%%%%%%%%%%%%%%%%%%%%%%%%%%%%%%%%%%%%%%%%%%%
%%%%%%%%%%%%%%%%%%%%%%%%%%%%%%%%%%%%%%%%%%%%%%%%%%%%%%%%%%%%%%%%%%%%%%%

\end{document}